\title{Computing Boundary Cycle of a Pseudo-Triangle Polygon from its Visibility Graph}
\author{
	Hossein Boomari\thanks{{\tt h.boomari1@student.sharif.ir}}
    \and
	Soheila Farokhi\thanks{{\tt soheilafar.2011@gmail.com}}
}
\newtheorem{obs}{Observation}
\newtheorem{observation}[obs]{Observation}
\authorrunning{Topics in Theoretical Computer Science 2020}
\institute{Department of Mathematical Sciences,\\Sharif University of Technology}
\begin{document}
\thispagestyle{empty}
\maketitle

\begin{abstract}
Visibility graph of a simple polygon is a graph with the same vertex set in which there is an edge between a pair of vertices if and only if the segment through them lies completely inside the polygon. Each pair of adjacent vertices on the boundary of the polygon are assumed to be visible. Therefore, the visibility graph of each polygon always contains its boundary edges. This implies that we have always a Hamiltonian cycle in a visibility graph which determines the order of vertices on the boundary of the corresponding polygon. In this paper, we propose a polynomial time algorithm for determining such a Hamiltonian cycle for a pseudo-triangle polygon from its visibility graph. 
\end{abstract}

\section{Introduction}
\label{sec:A}
Computing the visibility graph of a given simple polygon has many applications in computer graphics \cite{CG}, computational geometry \cite{ghosh-book} and robotics \cite{robot}. There are several efficient polynomial time algorithms for this problem \cite{ghosh-book}.

This concept has been studied in reverse as well: Is there any simple polygon whose visibility graph is isomorphic to a given graph and if there is such a polygon, is there any way to reconstruct it (finding positions for its vertices on the plain)? The former problem is known as recognizing visibility graphs and the latter one is known as reconstructing polygon from visibility graph. Both these problems are widely open. The only known result about the computational complexity of these problems is that they belong to \textit{PSPACE} \cite{everet-thesis} complexity class and more precisely belong to the class of \textit{Existence theory of reals} \cite{exist}. This means that it is not even known whether these problems are \textit{NP-Complete} or can be solved in polynomial time. Even, if we are given the Hamiltonian cycle of the visibility graph which determines the order of vertices on the boundary of the target polygon, the exact complexity class of these polygons are still unknown.

As primitive results, these problems have been solved efficiently for special cases of tower, spiral and pseudo-triangle polygons. A tower polygon consists of two concave chains on its boundary which share one vertex and their other end points are connected by a segment (See Fig.~\ref{fig:inst0}.a). A spiral polygon has exactly one concave and one convex chain on its boundary (See Fig.~\ref{fig:inst0}.b). The boundary of a pseudo-triangle polygon is only composed of three concave chains. The recognizing and reconstruction problems have been solved for tower polygons \cite{tower}, spiral polygons~\cite{spiral}, and pseudo-triangle polygons~\cite{pseudotriangle} in linear time in terms of the size of the graph. The algorithms proposed for realization and reconstruction of spiral polygon and tower polygons first find the corresponding Hamiltonian cycle of the boundary of the target polygon and then reconstruct such a polygon(if it is possible). But, the proposed algorithm for pseudo-triangle polygons needs the Hamiltonian cycle to be given as input as well as the visibility graph, and, having this pair reconstruct the target pseudo-triangle polygon. We use pseudo-triangle instead of pseudo-triangle polygon in the rest of this paper.
\begin{figure}[ht]
\centerline{\includegraphics[scale=0.5]{./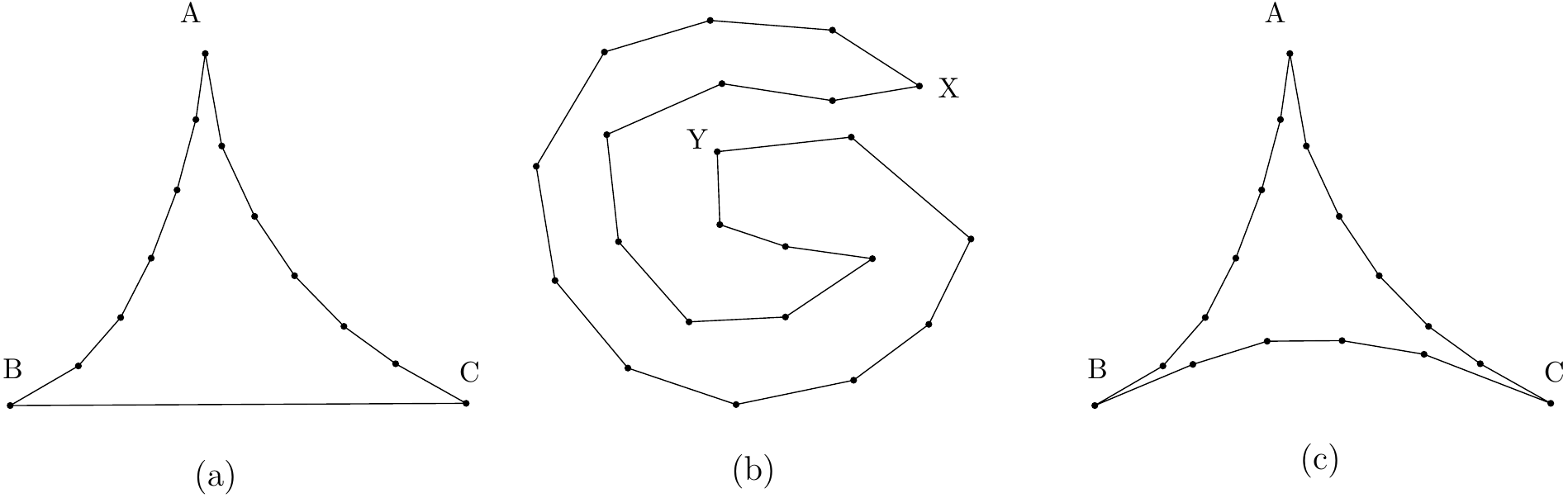}}
\vspace*{8pt}
\caption{a) Tower polygon, b) Spiral polygon, c) Pseudo-triangle polygon.\label{fig:inst0} }
\end{figure}

In this paper, we propose a method to find a Hamiltonian cycle of a realizable pseudo-triangle from its visibility graph in polynomial time. Therefore, the result of this paper in companion with the reconstruction method for realizing pseudo-triangles in \cite{pseudotriangle} will solve the realization and reconstruction problems for a pseudo-triangles from its visibility graph.

In the rest of this paper, we first review the algorithm of solving recognition problem for tower polygons and give some notations, definitions and properties of pseudo-triangles to be used in next sections.

\section{Preliminaries and definitions}
With a given pair of visibility graph and Hamiltonian cycle, Colley et al. proposed an efficient method to solve recognizing and reconstruction problems for tower polygons\cite{tower}.  Here, we review their method, briefly.

A graph is the visibility graph of a tower polygon if and only if by removing the edges of the Hamiltonian cycle from the graph, an isolated vertex and a connected bipartite graph are obtained and the bipartite graph has \textit{strong ordering} following the order of vertices in the Hamiltonian cycle. A strong ordering on a bipartite graph $G(V,E)$ with partitions $U$ and $W$ is a pair of $<_{U}$ and $<_{W}$ orderings on respectively $U$ and $W$ such that if $u<_{U}u^{\prime}$, $w<_{W}w^{\prime}$, and there are edges $(u, w^{\prime})$ and $(u^{\prime},w)$ in $E$, the edges $(u^{\prime},w^{\prime})$ and $(u,w)$ also exist in $E$. Graphs with strong ordering are also called \textit{strong permutation graphs}.

\subsection{Leveling a tower polygon}
The algorithm proposed by Colley et al. for reconstruction of a tower polygon introduced a method named \textit{levelling} for visibility graph of tower polygons. In this method the set of vertices of a tower polygon is covered with some subsets of its vertices called levels. The induced graph on each subset is a clique and each level is labeled with a number. It starts with level $l_1$ which contains the top vertex of the polygon. There are at most two candidates for the top vertex of a tower polygon. The details of the leveling method is given in Section~\ref{sec:leveling}. An assignment of vertices of $G$ to the chains is called a \textit{bordering}.

\subsection{Leveling Method}
\label{sec:leveling}
\begin{observation}
In $G(V, E)$ the degree of the top vertex is 2 and there are at least one and at most two vertices of degree 2. Therefore, there are at most 2 candidate for the top vertex of a tower polygon. In case there is 2 candidate for this vertex the other one is on the bottom of the tower.
\end{observation}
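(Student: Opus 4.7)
The plan is to split the observation into three claims and address them in order: (i) the top vertex always has degree $2$; (ii) every vertex that is not the top and not an endpoint of the base chord has degree at least $3$; (iii) the two base endpoints cannot both have degree $2$. Together these give at least one and at most two degree-$2$ vertices, and pin the possible locations to the top and the bottom of the tower.

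For (i), I would use the fact that the top vertex $t$ lies at the apex where the two concave chains meet. Since each chain is concave, the two neighbours of $t$ on the boundary are reflex vertices, and the polygon lies on the ``inside'' of the wedge at $t$. Any ray from $t$ aimed past one of its neighbours must immediately exit the polygon through the concave chain on that side, so no segment from $t$ to a non-neighbour lies entirely inside the polygon. Hence $\deg(t)=2$, which also shows that a degree-$2$ vertex always exists.

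For (ii), let $v$ be a vertex strictly between the apex and a base endpoint on one of the chains. I would argue that, in addition to its two boundary neighbours on the same chain, $v$ sees at least one further vertex, so $\deg(v)\ge 3$. The clean way is to extend the base chord inward from $v$: because the base lies opposite $v$ and the interior of a pseudo-line-of-sight from $v$ toward the opposite chain cannot be blocked by $v$'s own concave chain (it curves away from $v$), standard ``funnel''-style reasoning on concave chains shows that at least one endpoint of the base is visible from $v$. That extra neighbour is distinct from $v$'s two chain neighbours, so $\deg(v)\ge 3$.

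For (iii), I would argue by contradiction: suppose both base endpoints $\ell$ and $r$ have degree $2$. Then $\ell$ sees only its chain neighbour and $r$, and symmetrically for $r$. Combined with (ii) applied to the neighbours of $\ell$ and $r$ on their chains, this forces the visibility graph to collapse: no interior chain vertex can witness visibility across the polygon, contradicting step (ii), unless the polygon degenerates to a quadrilateral (in which case $\ell$ and $r$ coincide with the only interior chain vertices and there are still only two degree-$2$ vertices, the apex and one base endpoint). The main obstacle will be step (iii), specifically making the geometric argument that two simultaneous ``blind'' base corners force a contradiction with the concavity of both chains; I expect to handle it by a careful case analysis on which side of the base chord blocks visibility. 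Once (i)--(iii) are established, the observation follows immediately: there are exactly one or two degree-$2$ vertices, and if there are two, the extra one lies at the base, i.e.\ at the bottom of the tower.
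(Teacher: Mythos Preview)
The paper does not prove this statement at all: it is recorded as an \emph{Observation} with no accompanying argument, serving only to justify that the levelling procedure has at most two possible starting vertices. So your proposal already goes well beyond what the paper offers.

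That said, your outline is uneven. Claim~(i) is fine and is essentially the standard ``apex of two reflex chains sees only its two neighbours'' argument. Claim~(ii) is where your reasoning is loosest: asserting that every interior chain vertex sees a base endpoint is stronger than what you need and is not obviously true in every tower (a reflex bulge on the opposite chain can hide a base corner). What you actually need is only that each non-apex vertex has at least one cross-chain visibility. Claim~(iii) is, as you note, the delicate part, and the ``collapse/contradiction'' you sketch is not really an argument yet; nothing in your case analysis pins down why the two concave chains cannot simultaneously produce blind base corners.

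A much cleaner route for both (ii) and (iii) is to lean on the characterisation the paper already imports from Colley et al.: removing the Hamiltonian edges from the visibility graph of a tower yields one isolated vertex together with a connected bipartite graph on the remaining vertices. Any vertex of degree~$2$ in $G$ has both of its edges on the Hamiltonian cycle, so it becomes isolated after the removal; hence the number of degree-$2$ vertices equals the number of isolated vertices in the stripped graph. Connectedness of the bipartite remainder then immediately rules out three (or more) degree-$2$ vertices and forces any second degree-$2$ vertex to sit at the base, with no geometric case analysis needed. This replaces your (ii) and (iii) in one stroke and matches the level at which the paper is working.
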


Level $l_2$ contains the neighbours of the top vertex. There is a set ($l^\prime_{i+1}$) of at least one and at most two vertices in $V - \bigcup_{j=1}^{i} {l_j}$ which make a clique with vertices of $l_i$. $l_{i+1}$ contains:
\begin{enumerate}
\item If these vertices are the last vertices of $V$, which are not in any levels, they make the last level ($l_k$).
\item Otherwise, if there is two vertices in $l^\prime_{i+1}$, then $l_{i+1} = l^\prime_{i+1}$.
\item In the last case, there is a single vertex ($p$) in $l^\prime_{i+1}$ and this vertex with one of the vertices of $l_i$ makes $l_{i+1}$. In these situations exactly one of the vertices of $l_i$ has a neighbour in $V - \bigcup_{j=1}^{i+1} {l_j}$. This vertex of $l_i$ and $p$ makes $l_{i+1}$.
\end{enumerate}

Starting with a top vertex, there is a single leveling for the vertices of a tower polygon. The starting level contains the top vertex, the last level contains one or two vertices. If there are two vertices in the last level, they make the base of the tower and in case there is one vertex in the last level, the degree of this vertex is 2. Each of the other levels (middle levels) contains two vertices. The graph $G^\prime$ is the leveling graph of a visibility graph $G(V,E)$ which its vertex set is $V$ and its edges are those edges in $E$ which has not the endpoints in two consecutive level (see Figure~\ref{fig:tower_border}). We have the following observations about leveling and $G^\prime$.

\begin{figure}[ht]
\centerline{\includegraphics[scale=0.5]{./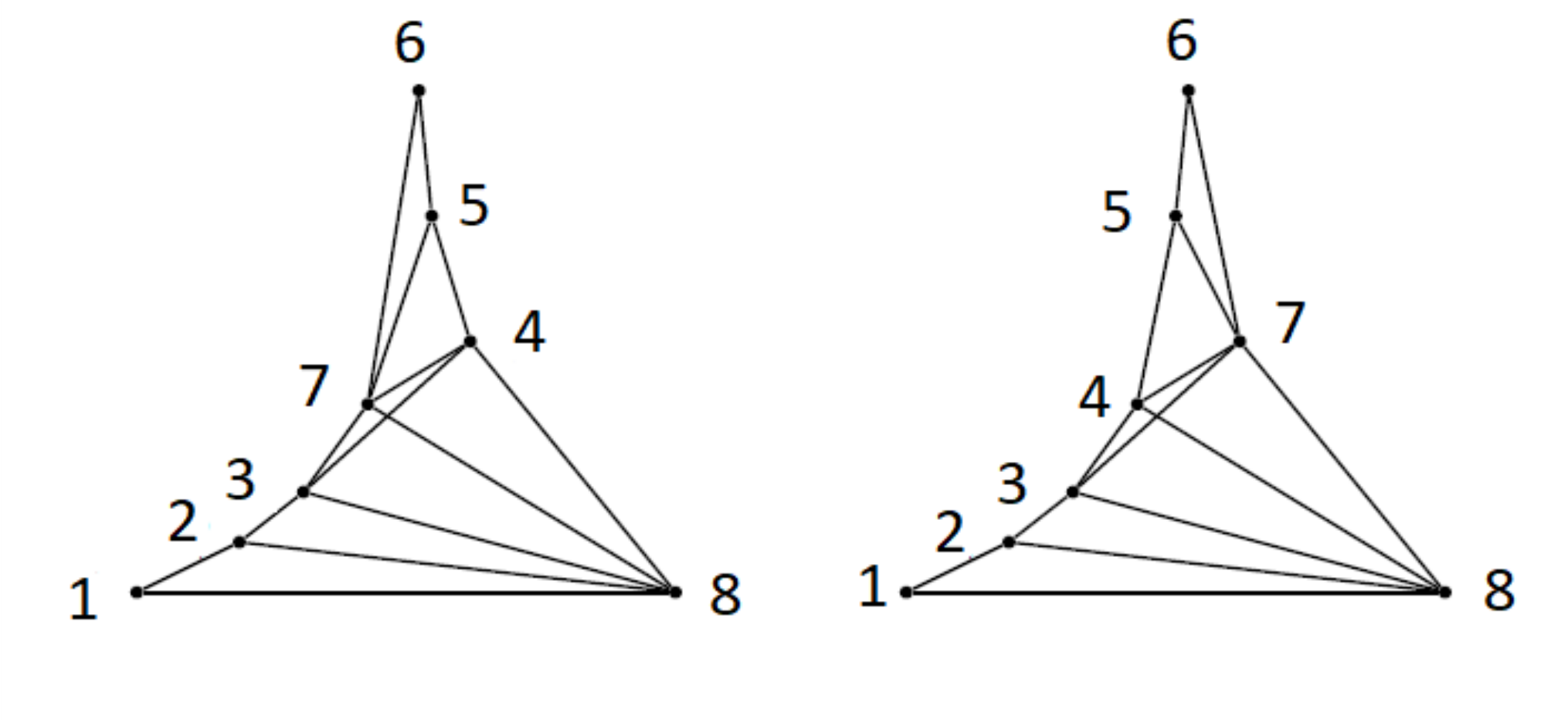}}
\vspace*{8pt}
\caption{Visibility graph of a tower polygon with two borderings.\label{fig:tower_border} }
\end{figure}

\begin{observation}
Vertices of a middle level are not on the same chain.
\end{observation}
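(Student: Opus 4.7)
The plan is to prove this by induction on the level index $i$, using the structural fact that a tower polygon has exactly two concave chains meeting at the top vertex and at a base segment, together with the recognition characterization recalled above (the bipartite graph obtained by deleting the Hamiltonian cycle from $G$ is a strong permutation graph).

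For the base case $i=2$, the preceding observation gives that the top vertex has degree $2$ in $G$, so its neighbors (which by definition constitute $l_2$) are exactly its two boundary-neighbors on the polygon. These are the first interior vertices of the two distinct concave chains descending from the top, so $l_2$ contains one vertex from each chain.

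For the inductive step, suppose $l_i=\{a,b\}$ is a middle level with $a$ on the left chain $C_L$ and $b$ on the right chain $C_R$, and let $a'$, $b'$ be the vertices immediately following $a$ on $C_L$ and $b$ on $C_R$ in boundary order (these exist because $l_i$ is not the last level). The key claim is that any candidate vertex $v\in V\setminus\bigcup_{j\le i}l_j$ forming a clique with $\{a,b\}$ must be one of $a'$, $b'$. I would establish this claim by combining: (i) $v$ must be visible to both $a$ and $b$; and (ii) because each chain is concave (its interior vertices are reflex), any sightline from $b$ into $C_L$ strictly below $a'$ must first exit through $a'$, and symmetrically for $a$ and the portion of $C_R$ below $b'$. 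Consequently $l'_{i+1}\subseteq\{a',b'\}$. In Case 2 of the leveling rule we then get $l_{i+1}=\{a',b'\}$, one vertex per chain; in Case 3, the unique new vertex is (say) $a'$, and the retained vertex of $l_i$ must be $b$ since $a$ now has no remaining unlabeled neighbor on $C_L$ other than $a'$, yielding $l_{i+1}=\{a',b\}$, again split across the two chains.

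The main obstacle I expect is giving step (ii) a rigorous, purely combinatorial justification, since at this stage we only have the graph $G$ in hand and not coordinates for the polygon. The cleanest route is to translate the concavity-blocking statement into the language of the strong ordering on the bipartite subgraph $G$ minus the Hamiltonian cycle: the two linear orders $<_L$ on $C_L$ and $<_R$ on $C_R$ agree with descent along each chain, and the defining implication of strong ordering prohibits a vertex of one chain from being adjacent to two non-consecutive vertices of the other chain below a common ancestor in the leveling. Once this is spelled out, (ii) follows as a direct corollary and the induction closes.
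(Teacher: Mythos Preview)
The paper records this as an observation with no proof, so your inductive argument already goes beyond what the source provides. The overall scheme and the base case are fine, but your step~(ii) is false as stated: in a tower polygon a vertex $b$ on the right chain can certainly see vertices of $C_L$ strictly below $a'$. For a concrete picture, make the left chain nearly straight and place $b$ high on a sharply curved right chain; then $b$ sees arbitrarily far down $C_L$, and $a'$ does not block those sightlines. Your attempted combinatorial translation has the same defect: the strong-ordering axiom does not forbid a vertex from being adjacent to many (even non-consecutive in your sense) vertices of the other chain; it only forces such neighborhoods to be intervals.

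The correct reason that $l'_{i+1}\subseteq\{a',b'\}$ is simpler and concerns \emph{same-chain} visibility, not cross-chain blocking. On a concave chain any two non-adjacent vertices are mutually invisible, because the segment joining them lies outside the polygon. Hence every unlabeled $v\in C_L$ with $v\neq a'$ fails to see $a$ and so cannot form a clique with $\{a,b\}$; symmetrically on $C_R$. This already yields the containment you want, with no appeal to~(ii).

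Your Case~3 analysis also needs a patch. You claim $a$ has no remaining unlabeled neighbor, but a~priori $a$ might see vertices of $C_R$ below $b'$. Here is where strong ordering (equivalently, the interval structure of cross-chain visibility) is actually used: if $b'\notin l'_{i+1}$ then $b'$ does not see $a$, and since the set of $C_R$-vertices visible from $a$ is a contiguous interval containing $b$, no vertex below $b'$ on $C_R$ sees $a$ either. Combined with the same-chain fact above, $a$ has no unlabeled neighbor other than $p=a'$, so $b$ is retained and $l_{i+1}=\{a',b\}$, completing the induction.
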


\begin{observation}
\label{obs:bordering}
$G^\prime$ is a bipartite graph and for each edge $(p,q)\in E(G^\prime)$, $p$ and $q$ do not belong to the same chain.
\end{observation}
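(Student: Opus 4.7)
The plan is to prove the stronger claim that every edge of $G^\prime$ has its endpoints on different chains; bipartiteness of $G^\prime$ then follows immediately using the partition $V = A \cup B$ induced by the two concave chains (with the shared top vertex assigned arbitrarily to either side).

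First I would record a geometric fact about tower polygons: if $x$ and $y$ lie on the same concave chain but are not adjacent along that chain, then they are not mutually visible in the polygon. The reason is that every internal vertex $v$ of a concave chain is a reflex vertex of the polygon, so for any such $v$ strictly between $x$ and $y$ on the chain the candidate segment $xy$ lies on the convex (exterior) side of the angle at $v$, and therefore exits the polygon in a neighborhood of $v$. Consequently, any edge of $G$ whose two endpoints lie on the same chain must already be a boundary edge of that chain.

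Next I would translate this into the language of the leveling procedure. By the preceding observation, every middle level contains exactly one vertex of chain $A$ and one of chain $B$; the same holds for the base level $l_k$ when it consists of two vertices, since the base segment connects the lowest vertices of the two chains. A walk through the three cases used to build $l_{i+1}$ from $l_i$ then shows that two vertices that are adjacent along the same chain are always covered by a pair of consecutive levels of the leveling (either both appear simultaneously in case 2, or one is ``carried over'' in case 3 until the next is introduced in the following level). Hence every boundary edge of a chain has its endpoints in two consecutive levels of the leveling, and is therefore excluded from $G^\prime$ by definition.

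Combining the two ingredients finishes the argument: if some $(p,q) \in E(G^\prime)$ had both endpoints on the same chain, then by the geometric fact $p$ and $q$ would have to be adjacent on that chain, and by the structural fact $(p,q)$ would then connect two consecutive levels, contradicting the defining property of $G^\prime$. I expect the geometric step to be the main obstacle, since it requires a careful analysis of how reflex chain vertices obstruct visibility in the tower; once that step is in hand, the remainder follows cleanly from the definition of $G^\prime$, the three cases of the leveling procedure, and the preceding observation.
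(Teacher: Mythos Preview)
The paper does not supply a proof of this observation; it is stated without argument, implicitly resting on the Colley et al.\ characterization recalled just before (that removing the Hamiltonian-cycle edges from the visibility graph of a tower polygon yields an isolated vertex together with a bipartite graph). Your proposal is therefore not competing with a proof in the paper but filling in a gap the authors left to the reader.

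Your argument is correct and is the natural way to justify the observation. The geometric step (non-adjacent vertices on a reflex chain are mutually invisible) is standard and exactly the right ingredient; your use of the three cases of the leveling procedure to show that consecutive chain vertices always occupy consecutive levels is also sound, including the carry-over in case~3. One small point you could make explicit: edges whose two endpoints lie in the \emph{same} level may survive in $G'$ under the paper's definition, but by the preceding observation such a level contains one vertex from each chain, so these edges also respect the bipartition. With that remark added, the argument is complete.
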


\begin{observation}
Any bordering that satisfies Observation~\ref{obs:bordering} has a realization as tower polygon. In this realization the order of vertices of each chain follows the order of the leveling numbers.
\end{observation}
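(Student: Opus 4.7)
The plan is to deduce this from the characterization of tower-polygon visibility graphs stated earlier: a graph $G$ equipped with a Hamiltonian cycle realizes a tower polygon iff removing the cycle leaves an isolated vertex plus a connected bipartite graph with a strong ordering compatible with the cycle. So my strategy is to use the bordering together with the leveling to extract a candidate Hamiltonian cycle and a strong ordering, and then invoke the characterization.

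First I would read off the boundary cycle directly from the leveling: within each chain, take the vertices in order of increasing level number; then concatenate the two chains, inserting the top vertex at one end and (if it exists) the bottom vertex at the other. Because consecutive levels form a clique with exactly one vertex from each chain (by the leveling rules of Section~\ref{sec:leveling}, together with Observation~\ref{obs:bordering} which forbids two same-chain vertices in a middle level), the edges connecting consecutive positions in this sequence are all present in $G$, so the sequence is indeed a Hamiltonian cycle of $G$. A routine check shows that the set of edges removed to form $G'$ coincides, up to the isolated top/bottom vertex, with the edges of this cycle.

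Next I would show that the orderings $<_U$ and $<_W$ induced on the two chains by leveling numbers form a strong ordering of $G'$. The main obstacle is precisely this step. Concretely, given $u <_U u'$, $w <_W w'$ with $(u,w'),(u',w)\in E(G')$, I need $(u,w),(u',w')\in E(G')$. The argument I expect to use is inductive on the levels between the four vertices: the clique structure of each middle level, together with Observation~\ref{obs:bordering} that $G'$-edges connect different chains, forces the ``interior'' visibilities to close up. I would phrase this as a statement about how a visibility edge that skips several levels must be accompanied by the parallel and the crossing edges implied by the two concave chains.

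Once the strong ordering is established, the characterization of Colley et al.\ immediately yields a realization as a tower polygon, and standard properties of that construction guarantee that vertices on each chain appear in the order of their leveling numbers. If needed, I would verify the latter directly from the realization recipe in \cite{tower} rather than re-deriving it.
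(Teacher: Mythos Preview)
The paper does not supply a proof of this observation at all: Section~\ref{sec:leveling} is a summary of Colley et~al.'s method from \cite{tower}, and this observation is stated there as a fact imported from that reference. So there is no ``paper's proof'' to compare your attempt against; your write-up would be filling in a proof the authors chose to omit.

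On the merits of your outline, the overall strategy---manufacture a Hamiltonian cycle $H$ from the bordering and leveling, then feed the pair $(G,H)$ into the Colley et~al.\ characterization---is reasonable, but there is a genuine gap. You assert that ``the set of edges removed to form $G'$ coincides, up to the isolated top/bottom vertex, with the edges of this cycle,'' and then you prove the strong-ordering property for $G'$. But $G'$ is \emph{not} $G-H$: by the paper's definition, $G'$ drops every edge whose endpoints lie in two consecutive levels, and consecutive levels together form a clique, so the cross-chain edges between consecutive levels are removed when you pass to $G'$ even though they are not boundary edges of $H$. Consequently, a strong ordering on $G'$ does not automatically give you a strong ordering on $G-H$, which is what the characterization actually requires. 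You also never check that $G-H$ is \emph{connected} on the non-isolated part, and this is part of the hypothesis of the characterization you are invoking. To repair the argument you should work directly with $G-H$: show that every non-$H$ edge joins the two chains of your bordering (this is where you genuinely need that $G$ was already the visibility graph of \emph{some} tower, so that same-chain non-adjacent pairs are invisible), show connectedness, and then verify the strong ordering for $G-H$ rather than for $G'$.
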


Consequently, the Hamiltonian cycle of the realization of each bordering is unique and is computable in $O(|E|)$.

\begin{observation}
\label{obs:level_count}
A tower polygon with visibility graph $G(V,E)$ and leveling graph $G^\prime$ with $c$ connected components has at most 2 levelings and exactly $2^{c-1}$ borderings.
\end{observation}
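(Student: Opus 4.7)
The plan is to address the two counts separately and then combine them in the final statement.

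For the bound on the number of levelings, I would argue that every leveling is uniquely determined by its starting vertex, the top of the tower. Once $l_1$ is fixed, the construction rules~(1)--(3) leave no freedom: $l_2$ is forced to be the neighborhood of the top, and inductively each $l_{i+1}$ is read off from $l_i$ and the remaining vertex set. Hence the number of levelings equals the number of admissible choices for the top vertex, which by the opening observation of Section~\ref{sec:leveling} is at most~$2$ (the top vertex must have degree~$2$, and $V$ contains at most two such vertices).

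For the bordering count, I would set up a bijection between borderings and proper $2$-colorings of the leveling graph $G'$, taken modulo the involution that swaps the two chain labels. One direction is Observation~\ref{obs:bordering}: any valid bordering induces a proper $2$-coloring of $G'$ whose color classes are the two concave chains. The reverse direction is supplied by the observation that every bordering satisfying Observation~\ref{obs:bordering} is realizable as a tower, so every proper $2$-coloring arises from an actual bordering. Once the bijection is in place, the count is standard: a bipartite graph with $c$ connected components has exactly $2^c$ proper $2$-colorings, since each component contributes an independent binary flip. Finally, quotienting by the global swap of the two chain labels (which yields a mirror-image polygon with the same unordered chain partition) divides the count by $2$, giving $2^c/2 = 2^{c-1}$.

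The main obstacle I anticipate is making the bijection between borderings and proper $2$-colorings watertight. In particular, the earlier observation requires that the two vertices of each middle level lie on different chains, and one has to check that this per-level constraint is not an extra restriction beyond $G'$-bipartiteness, so that none of the $2^c$ colorings are spuriously excluded. The natural way to handle this is to show that the two vertices of any middle level are either joined by a $G'$-edge or are connected by a $G'$-path routed through other levels, so that any proper $2$-coloring of $G'$ automatically separates them; verifying this structural property of the leveling is the delicate step that has to be done before the counting argument can be invoked cleanly.
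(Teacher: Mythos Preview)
The paper states this observation without proof, so there is no argument in the text to compare against directly. Your proposal is essentially a correct proof and is in the spirit of the surrounding discussion: the paper's remark that ``starting with a top vertex, there is a single leveling'' is exactly your first paragraph, and Observation~\ref{obs:bordering} together with the realizability observation following it are precisely the two directions of your bijection.

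Two small comments on the gap you flag at the end. First, it closes more easily than you suggest: the two vertices of any middle level $l_i$ see each other (each level induces a clique), and that edge lies within a single level, hence is \emph{not} between two consecutive levels and therefore belongs to $G'$ by definition. So same-level vertices are always adjacent in $G'$, and every proper $2$-coloring separates them automatically; no path-routing argument is needed. Second, your phrasing ``joined by a $G'$-edge or connected by a $G'$-path'' is slightly loose: mere connectivity in a bipartite graph does not force distinct colors---you need odd distance. The direct edge gives distance~$1$, which is exactly what you want, so the first disjunct already suffices.

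On the division by $2$: your quotient by the global chain-label swap is one valid reading. An equivalent one, arguably closer to the paper's setup, is to observe that the top vertex is always an isolated component of $G'$ (its only $G$-edges go to $l_2$ and are therefore excluded from $G'$); since the apex is not assigned to either chain, that component contributes no freedom, and the remaining $c-1$ components yield $2^{c-1}$ borderings directly. Either reading gives the stated count.
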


\begin{figure}[ht]
\centerline{\includegraphics[scale=0.5]{./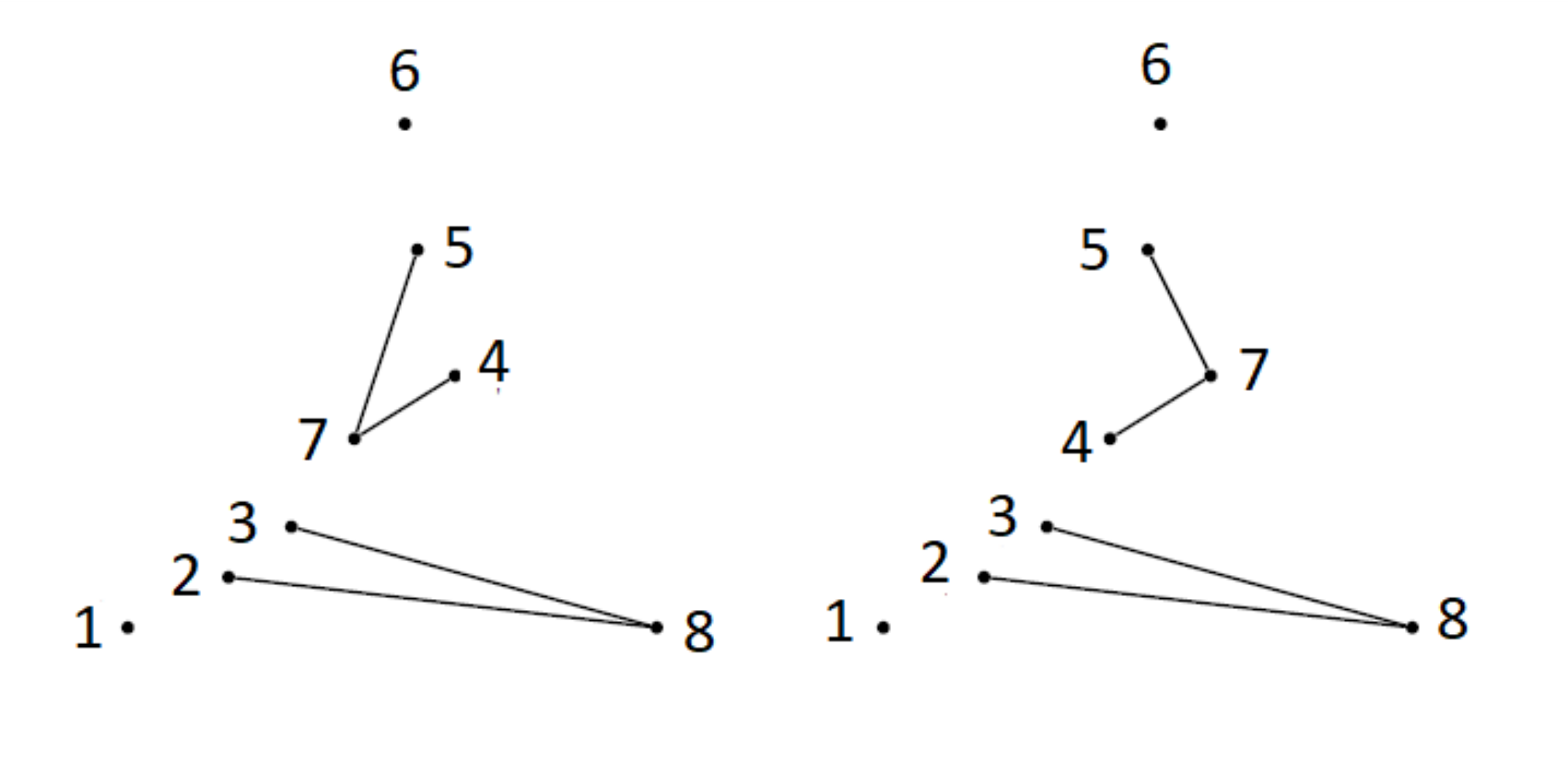}}
\vspace*{8pt}
\caption{Bordering graph of a tower polygon with two borderings.\label{fig:border_graph} }
\end{figure}

\subsection{Pseudo-tower polygons}
Consider a tower polygon in which some of the bottom vertices of one of its chains are removed in such a way that the last vertices of each chain are not visible to each other. Name this kind of polygons as pseudo-tower. According to this definition, a pseudo-tower polygon has no Hamiltonian cycle. This polygon is composed of a tower polygon and an induced path at the end of a chain which can not see any vertex from the other chain (see Figure~\ref{fig:inst1}). The induced path of a pseudo-tower polygons is called its \textit{tail}.

\begin{lemma}
\label{lem:pseudo-tower}
Leveling, bordering and boundary of a visibility graph $G(V,E)$ for a pseudo-tower polygon can be computed in $O(|E(G)|)$.
\end{lemma}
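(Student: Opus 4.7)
The plan is to exploit the structural decomposition of a pseudo-tower into a tower polygon plus a tail, reducing the problem to the tower leveling of Section~\ref{sec:leveling} after the tail has been identified and excised. First I would locate the tail. Since the tail is an induced path whose vertices see no vertex of the opposite chain, every interior tail vertex has visibility-graph degree exactly $2$, and the free endpoint of the tail has degree $1$ (its second boundary neighbor is missing). Starting from this unique degree-$1$ vertex, a single BFS/DFS that walks along degree-$2$ vertices stops at the first vertex $v^\ast$ of degree at least $3$; the walked subpath is the tail and $v^\ast$ is its attachment point to the tower. This step costs $O(|E(G)|)$.

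Next I would form the induced subgraph $G_T$ on the vertices remaining after removing the interior tail vertices (retaining $v^\ast$) and argue that $G_T$ is the visibility graph of the underlying tower, with $v^\ast$ playing the role of the bottom vertex of the shortened chain. The induced-path property of the tail is crucial here: since no edges leave the tail except the one joining $v^\ast$ to its tail-neighbor, deleting the tail interiors removes no edge that a tower visibility graph would carry. Once $G_T$ is recognized as a tower visibility graph, the leveling procedure described earlier, together with Observations~\ref{obs:bordering} and~\ref{obs:level_count}, yields a leveling, a bordering, and the boundary cycle of the tower part in $O(|E(G_T)|)=O(|E(G)|)$ time.

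Finally I would splice the tail back onto the tower boundary. Because $v^\ast$ is known to lie on a specific chain, appending the tail as a continuation of that chain with consecutive level numbers extends the leveling, bordering, and boundary of $G_T$ to corresponding objects on all of $G$ in one linear pass over the tail, leaving the total cost at $O(|E(G)|)$. I expect the main obstacle to be the certification embedded in the middle paragraph: rigorously verifying that tail-excision of a pseudo-tower visibility graph produces exactly a tower visibility graph, so that the machinery of Colley et al.~\cite{tower} applies verbatim. This requires checking that the tail is detected unambiguously (no spurious degree-$2$ run inside the tower can be mistaken for a tail, thanks to the characterization of degree-$2$ vertices stated in the leveling observations) and that the degree profile of $v^\ast$ is consistent with being a chain-bottom vertex; once these are settled, the remaining construction is a straightforward invocation of the already established tower results.
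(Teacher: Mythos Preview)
Your proposal is correct and follows essentially the same route as the paper: locate the unique degree-$1$ vertex as the tail endpoint, trace the induced path of tail vertices from it, delete the tail to obtain a tower visibility graph, and invoke the linear-time tower leveling/bordering of Colley et al.\ before reattaching the tail. The only minor addition in the paper is an explicit characterization of the top vertex directly in $G$ (the unique degree-$2$ vertex whose two neighbours are adjacent), whereas you defer this identification to the tower routine on $G_T$; both are equivalent once the tail has been excised.
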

\begin{proof}
A visibility graph of a pseudo-tower polygon may have more than two vertices with two neighbours. But there is only one vertex of degree two such that its neighbours see each other. This vertex is the only candidate of the top vertex of pseudo-tower polygon. In addition, visibility graph of a pseudo-tower polygon has a single vertex $p$ with one neighbour which is the last vertex of the tail of the polygon. Therefore, we can start at this point and find the induced path in linear time. Removing the vertices of the induced path, leaves a visibility graph which corresponds to a tower polygon which can be reconstructed in linear time.
\end{proof}

\begin{figure}[ht]
\centerline{\includegraphics[scale=0.5]{./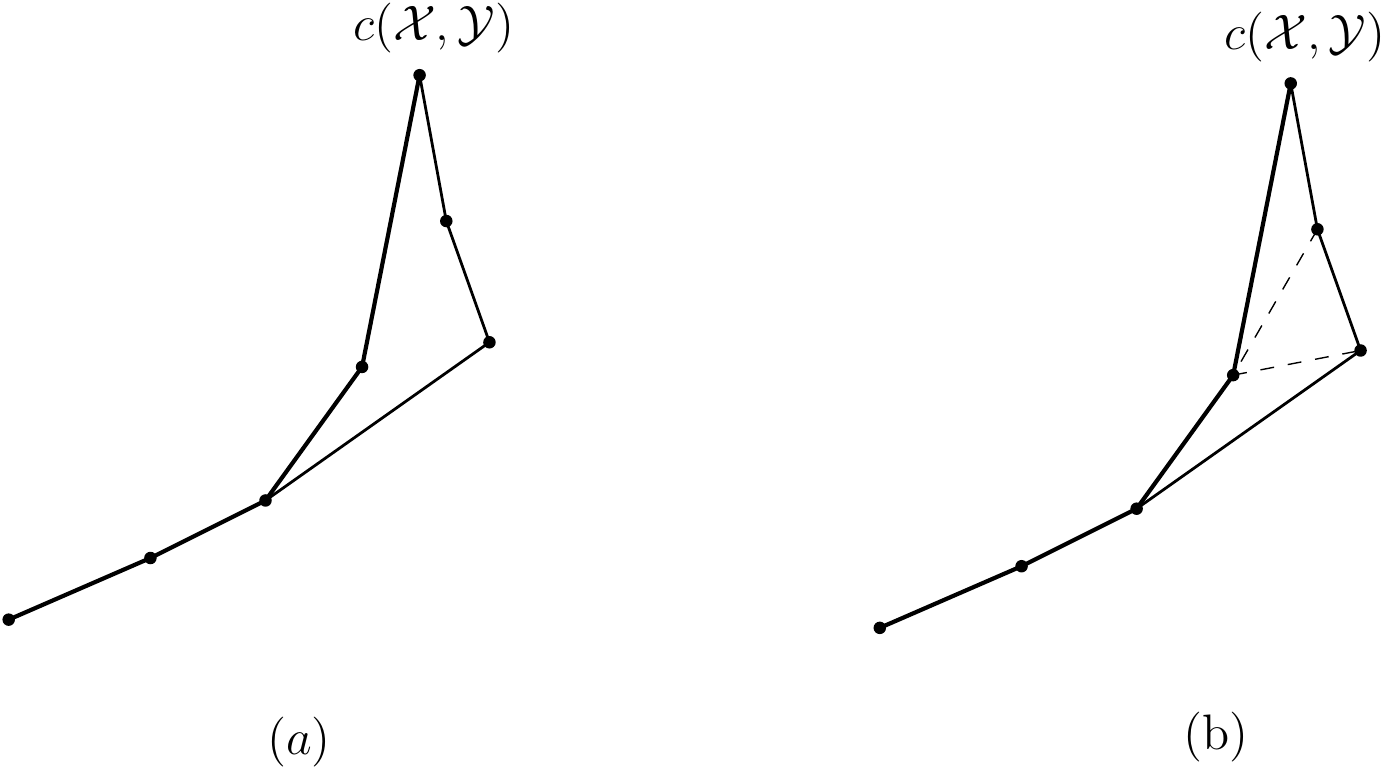}}
\vspace*{8pt}
\caption{a) Pseudo-tower polygon, b) The visibility graph of the polygon.\label{fig:inst1} }
\end{figure}

\subsection{Pseudo-triangle Polygons}
We assume that pseudo-triangles has the layout presented in Figure~\ref{fig:pseudo-triangle}-a with $\mathcal{U}$, $\mathcal{V}$ and $\mathcal{W}$ as respectively its left, right and bottom concave chains, and the common vertex between two concave chains like $\mathcal{V}$ and $\mathcal{W}$ is denoted by $c(\mathcal{V},\mathcal{W})$. The order and the name of vertices for the chains $\mathcal{U}$, $\mathcal{V}$, $\mathcal{W}$ are denoted by $<c(\mathcal{U},\mathcal{V}),u_1,u_2,...,c(\mathcal{U},\mathcal{W})>$, $<c(\mathcal{U},\mathcal{V}),v_1,v_2,...,c(\mathcal{V},\mathcal{W})>$ and $<c(\mathcal{U},\mathcal{W}),w_1,w_2,...,c(\mathcal{V},\mathcal{W})>$,  respectively. For a chain like $\mathcal{W}$ and a vertex $p$ on this chain, the $i_th$ vertex in the walk from $p$ toward $c(\mathcal{V},\mathcal{W})$ is denoted by $p^{i}_{c(\mathcal{U},\mathcal{W})}$. The set of all vertices on chain $\mathcal{U}$ which are visible from a vertex $p$ is denoted by $N_{\mathcal{U}}(p)$. The first and the last vertices on $\mathcal{U}$ in the walk from $c(\mathcal{U},\mathcal{V})$ toward $c(\mathcal{U},\mathcal{W})$, which is visible to all vertices of the set $S={p_1,...,p_i}$, are denoted by $u_{c(\mathcal{U},\mathcal{V})}(S)$ and $U_{c(\mathcal{U},\mathcal{V})}(S)$, respectively. In addition, we assume that that the top joint vertex $c(\mathcal{U},\mathcal{V})$ has the least degree between the joint vertices.

\begin{figure}[ht]
\centerline{\includegraphics[scale=0.7]{./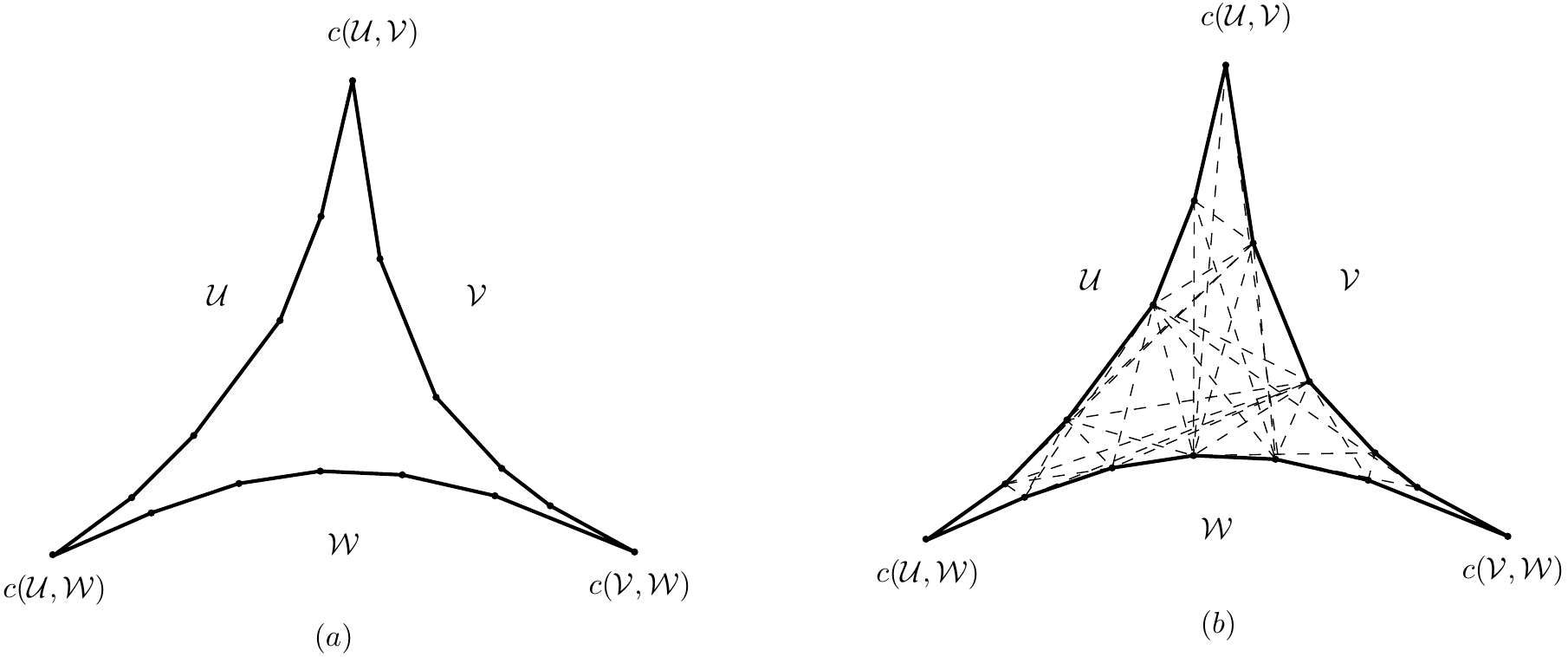}}
\vspace*{8pt}
\caption{a) A pseudo-triangle, b) The visibility graph of the polygon.\label{fig:pseudo-triangle} }
\end{figure}

\section{Computing Hamiltonian cycle}
\label{sec:C}
For a given graph $G$, we present our method to find the Hamiltonian cycle $H(P)$ corresponding to the boundary cycle of some pseudo-triangle $P$ whose visibility graph is $G$. We first find the vertices which can be candidates for the top vertex $c(\mathcal{U},\mathcal{V})$. Then, we use this vertex to split the visibility graph into some regions. Finally, we introduce some necessary constraints in the visibility graph of pseudo-triangles and the detected regions, to be used to extract the Hamiltonian cycle from the visibility graph.

\subsection{Find the top join vertex $c(\mathcal{U},\mathcal{V})$}
As assumed before, $c(\mathcal{U},\mathcal{V})$ has the least degree between the joint vertices.

\begin{lemma}
\label{lem:top-degree}
Assume that $p$ and $q$ are the extreme(joint vertices) of a chain of a pseudo-triangle $P$ and $v$ is a non-joint vertex on this chain. The degree of $v$ in the visibility graph of $P$ is strictly less than the degree of exactly one of the vertices $p$ and $q$.
\end{lemma}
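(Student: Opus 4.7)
The plan is to compare the visibility neighbourhoods of $p$, $v$, and $q$ chain-by-chain, exploiting the concavity of the three chains of the pseudo-triangle. Without loss of generality let the chain in question be $\mathcal{U}$, with $p=c(\mathcal{U},\mathcal{V})$ and $q=c(\mathcal{U},\mathcal{W})$. For each $x\in\{p,v,q\}$ I would write $\deg(x)=a_x+b_x+c_x$, where $a_x,b_x,c_x$ are the numbers of visible neighbours of $x$ on $\mathcal{U}\setminus\{x\}$, on $\mathcal{V}\setminus\{p\}$, and on $\mathcal{W}\setminus\{q\}$ respectively.

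The structural core of the argument is a monotonicity statement that I would establish first: as one walks along $\mathcal{U}$ from $p$ toward $q$, the set of visible vertices on $\mathcal{V}$ can only shrink (it is always a contiguous arc anchored at $p$), while the set of visible vertices on $\mathcal{W}$ can only grow (a contiguous arc anchored at $q$). The reason is that every line of sight from a point on $\mathcal{U}$ into $\mathcal{V}$ (resp.\ $\mathcal{W}$) pivots against some reflex vertex in a consistent direction as the source slides along $\mathcal{U}$. Because $p$ itself lies on $\mathcal{V}$ and $q$ itself lies on $\mathcal{W}$, this monotonicity forces $b_p>b_v$ and $c_q>c_v$ strictly, the strictness coming from the fact that at least one reflex vertex of $\mathcal{U}$ lies between $v$ and each joint.

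It then remains to combine these inequalities with the within-chain terms $a_p,a_v,a_q$ and conclude that \emph{exactly} one of $\deg(p)-\deg(v)$ and $\deg(q)-\deg(v)$ is strictly positive. Monotonicity alone supplies at least one strict inequality, but ruling out both holding simultaneously is the main obstacle. For this crux I would use a pairing argument: each $\mathcal{V}$-vertex visible from $p$ but not from $v$ is paired with a corresponding $\mathcal{W}$-vertex visible from $v$ but not from $p$, obtained by pivoting the blocked sight line across the reflex vertex of $\mathcal{U}$ adjacent to $v$; a symmetric pairing handles $q$. The unmatched residual lives on whichever side of $v$ contains more of $\mathcal{U}$'s reflex structure, and this residual identifies the unique joint whose degree strictly exceeds $\deg(v)$. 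Verifying the strictness---that no tie can occur---is the most delicate point, and I expect it to rely on the reflex/convex distinction between $v$ and the joints $p,q$, which introduces an asymmetry that the pairing cannot exactly cancel.
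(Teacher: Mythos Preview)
Your central monotonicity claim is false. You assert that as one walks along $\mathcal{U}$ from $p=c(\mathcal{U},\mathcal{V})$ toward $q=c(\mathcal{U},\mathcal{W})$, the set of $\mathcal{V}$-vertices visible can only shrink. But $p$ itself, being a convex corner whose two boundary neighbours $u_1,v_1$ are both reflex, sees only $v_1$ on $\mathcal{V}\setminus\{p\}$ (the reflex vertex $v_1$ hides $v_2,v_3,\dots$ from $p$). Its neighbour $u_1$, one step along $\mathcal{U}$, typically sees $v_1,v_2,\dots,v_k$ for some $k\ge 1$: this is exactly the tower-polygon visibility pattern between the two reflex chains $\mathcal{U}$ and $\mathcal{V}$. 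So $b_p=1\le b_{u_1}$ in general, and your strict inequality $b_p>b_v$ fails already at the first step. Visibility from $\mathcal{U}$ into $\mathcal{V}$ in fact \emph{grows} as you leave $p$, until $\mathcal{W}$ may begin to occlude it; it is not monotone in either direction, and the rest of your plan (the pairing argument built on that monotonicity) collapses with it.

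Beyond this, you are aiming at the wrong inequality. The lemma's phrasing is unfortunate, but the paper's own proof and the corollary drawn from it (that the minimum degree of the graph is attained at a joint vertex) establish the opposite direction: the non-joint vertex $v$ has degree strictly \emph{greater} than at least one of the two joints --- not that exactly one joint strictly exceeds $\deg(v)$. The paper's argument is a short dichotomy: either every vertex visible from $c(\mathcal{U},\mathcal{V})$ is also visible from $v$, or the obstruction is caused by a vertex of $\mathcal{W}$, in which case every vertex visible from $c(\mathcal{U},\mathcal{W})$ is visible from $v$. Either way $N(v)$ contains the full neighbourhood of one joint, and strictness comes from $v$'s boundary neighbour on $\mathcal{U}$, which that joint cannot see. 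No chain-by-chain degree decomposition or pairing is needed, and the ``exactly one'' clause is neither proved in the paper nor required downstream.
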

\begin{proof}
Without lose of generality, assume that $p$ is a vertex on chain $\mathcal{U}$. Either all of the vertices which are visible from $c(\mathcal{U},\mathcal{V})$ are visible from $p$ or some vertices on chain $\mathcal{W}$ blocked their visiblity. In the latter case, no vertex can block the visibility of $p$ and the vertices which are visible from $c(\mathcal{U},\mathcal{W})$. In addition each of the vertices which is adjacent to $p$ on the Hamiltonian cycle is either invisible to $c(\mathcal{U},\mathcal{V})$ or $c(\mathcal{U},\mathcal{W})$. Therefore, the degree of $p$ is strictly higher than one of the endpoints of its chain.
\end{proof}

The direct result of Lemma~\ref{lem:top-degree} is that the degree of each non-joint vertex of a pseudo-triangle is strictly higher than one of the joint vertices of its chain. Therefore, the minimum degree of the vertices of the graph belongs to one of the joint vertices. Remind that we assume that the joint vertex with the least minimum degree in $G(P)$ is the top joint vertex of $P$. The degree of this vertex is strictly less than all vertices of $P$ in $G(P)$. Therefore, to find the top vertex of $P$ we need to find the vertex with minimum degree in $G(P)$.

\begin{theorem}
\label{thm:top}
In a visibility graph $G(P)$ of a pseudo-triangle $P$, there are at most three candidates for $c(\mathcal{U},\mathcal{V})$ which are all joint vertices.
\end{theorem}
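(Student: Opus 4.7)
The plan is to combine Lemma~\ref{lem:top-degree} with the assumption that $c(\mathcal{U},\mathcal{V})$ is the joint vertex of minimum degree, and then count how many vertices can achieve the global minimum degree in $G(P)$.

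First I would use Lemma~\ref{lem:top-degree} in the form stated in the paragraph following its proof: for every non-joint vertex $v$ lying on a chain with endpoints $p$ and $q$, one has $\deg_{G(P)}(v) > \min\bigl(\deg_{G(P)}(p),\deg_{G(P)}(q)\bigr)$. From this I conclude that no non-joint vertex can realize the minimum degree of $G(P)$: if a non-joint $v$ were of minimum degree, then one of the two joint endpoints of its chain would have strictly smaller degree, contradicting minimality. Hence every vertex of minimum degree in $G(P)$ is a joint vertex.

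Next I would invoke the standing assumption that $c(\mathcal{U},\mathcal{V})$ has the least degree among the three joint vertices. Combined with the previous paragraph, $c(\mathcal{U},\mathcal{V})$ attains the global minimum degree of $G(P)$, and therefore the set of candidates for $c(\mathcal{U},\mathcal{V})$ is contained in the set of vertices of $G(P)$ achieving this global minimum. By the previous step, this candidate set consists entirely of joint vertices.

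Finally I would finish by the trivial observation that a pseudo-triangle has exactly three joint vertices, namely $c(\mathcal{U},\mathcal{V})$, $c(\mathcal{U},\mathcal{W})$ and $c(\mathcal{V},\mathcal{W})$; hence the candidate set has cardinality at most three, and all of its elements are joint vertices, proving the theorem. The only subtle point, and the step I would double-check carefully, is the strict inequality in Lemma~\ref{lem:top-degree}: without strictness a non-joint vertex could tie a joint endpoint for minimum degree, and then the three-candidate bound would fail; strictness is exactly what forces candidates into the joint-vertex set of size three.
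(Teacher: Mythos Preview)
Your argument is correct and is essentially the same as the paper's: both use the strict inequality from Lemma~\ref{lem:top-degree} to conclude that only joint vertices can attain $\delta(G)$, and since $c(\mathcal{U},\mathcal{V})$ is by convention the minimum-degree joint vertex, the candidate set is contained in the three joint vertices. Your write-up is simply a more carefully unpacked version of the paper's one-sentence proof.
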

\begin{proof}
As the degree of the joint vertex ($\delta(G)$) in $P$ is strictly less than all other non-joint vertices, $G(P)$ have at most 3 vertex with degree equal to $\delta(G)$ which are the joint vertices of $P$.
\end{proof}

\subsection{Split the Polygon}
\label{sec:split}
It is simple to see that there is always a vertex $w^\prime_0$ on $\mathcal{W}$ which is visible to vertices from both chains $\mathcal{U}$ and $\mathcal{V}$. We assume that there is another vertex $w^\prime_1$ on $\mathcal{W}$ and adjacent to $w^\prime_0$ with this property. The degenerate cases where there is only a single vertex which is visible from both chains $\mathcal{U}$ and $\mathcal{V}$ will be handled separately in Section~\ref{sec:degen}. Then, the edge $e=(w^\prime_0, w^\prime_1)$ is called a split-edge. Assume that this special edge is known. The vertices of $P$ which are above this edge correspond to a tower polygon and we denote this polygon as $C_e(G)$. By removing the vertices of $C_e(G)$ from $G$, the rest of the graph will be splited into two connected components. Both of these connected components are also pseudo-tower polygons. Lets name the one which contains $c(\mathcal{U},\mathcal{W})$ as $A_e(G)$ and the other one, which contains $c(\mathcal{V},\mathcal{W})$, as $B_e(G)$ (see Figure~\ref{fig:split}). Note that although we have defined these parts based on the realization of the pseudo-triangle, but, their combinatorial structures only depend on the visibility graph and the edge $e=(w^\prime_0, w^\prime_1)$.

\subsection{Finding the vertices of $C_e(G)$}
\label{sec:topC}

There are two types of vertices in $C_e(G)$:
\begin{enumerate}
\item Vertices which are visible to both $w^\prime_0$ and $w^\prime_1$. These vertices are the only vertices in $G(P)$ which have this property and we can find them in linear time. If $c(\mathcal{U},\mathcal{V})$ belongs to this group, all vertices of $C_e(G)$ belongs to this group and the there will be no vertex in the next group.
\item Vertices which are not visible to either $w^\prime_0$ or $w^\prime_1$. These vertices are placed above the vertices of the previous group. These vertices can be determined in linear time by using the leveling algorithm for tower polygons. 
\end{enumerate}

Name the set of the vertices which are visible to both vertices of $e$ as $X$, and the set of vertices which are invisible to one of the endpoints of $e$, $Y$. A vertex in pseudo-triangle polygon can see one connected part of each concave chain. Beside that, the vertices of the last level of $C_e(G)$ are visible to both vertices of $e$. Therefore, the leveling algorithm traverse and detect all vertices of $Y$ before the vertices of $X$ and we can start leveling from the top joint vertex and continue the leveling process until the leveling reach to a vertex which is visible to both vertices of $e$. Name this set of vertices as $C_1$. Adding all vertices which are visible to both end points of $e$, to the set $C_1$ makes the set of vertices of $C_e(G)$.

\begin{figure}[ht]
\centerline{\includegraphics[scale=0.75]{./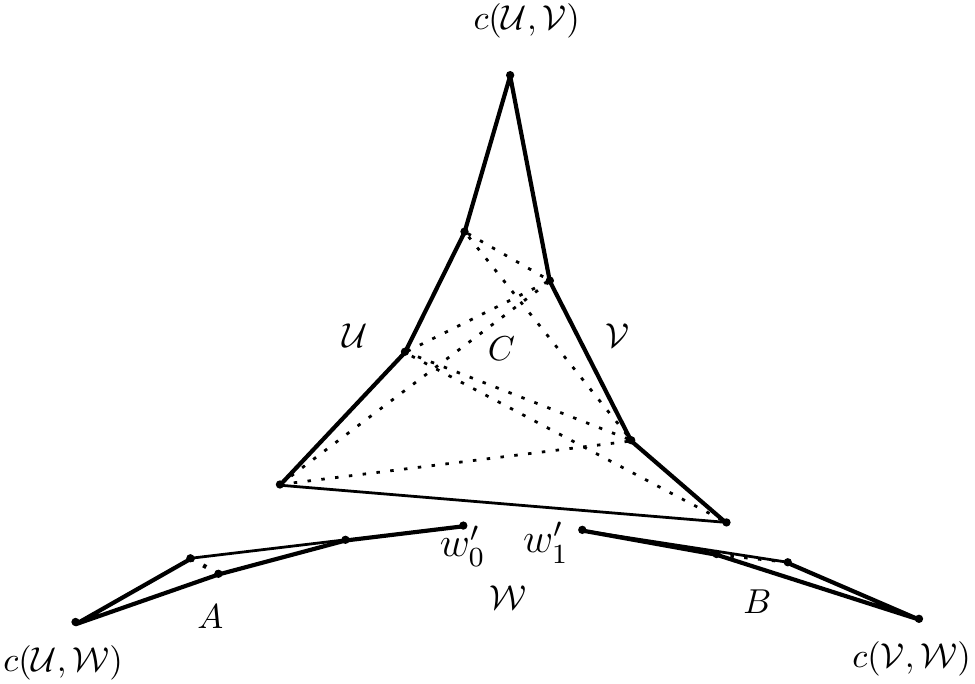}}
\vspace*{8pt}
\caption{Splited pseudo-triangle with respect to $(w^\prime_0, w^\prime_1)$.\label{fig:split} }
\end{figure}

After determining the vertices of $C_e(G)$ and removing them from $G(P)$, the vertices of $A_e(G)$ and $B_e(G)$ are the connected components of the remained part of $G{P}$.

\subsection{Degenerate Cases}
\label{sec:degen}

The degenerate case of pseudo-triangle polygons is the case in which there is no split-edge (there is only one vertex $w^\prime_0$) on the chain $\mathcal{W}$ which is visible from some vertices of both of the other chains. In this case we need another method to split the polygon. Similar to the method discussed in Section~\ref{sec:topC}, we choose a split edge $e=w^\prime_0w^\prime_1$ in $G(P)$ and split the pseudo-triangle polygon into three parts $A_e(G)$, $B_e(G)$ and $C_e(G)$.  We assume that $w^\prime_1$ is $w^{\prime 1}_{0 c(\mathcal{V},\mathcal{W})}$. Therefore $C_e(G)$ which contains the vertices above this split-edge contains no vertex in chain $\mathcal{V}$. The rest of process to find vertices of $C_e(G)$ and find $A_e(G)$ and $B_e(G)$ remains unchanged.

\subsection{Hamiltonian cycle of the splited parts}
As stated before, all three parts, $A_e(G)$, $B_e(G)$, and $C_e(G)$, are tower or pseudo-tower polygons. Therefore, their recognition, reconstruction and finding their Hamiltonian cycles can be solving independently. However, Observation~\ref{obs:level_count} states that there is no unique solution for the Hamiltonian cycle of a tower polygon. In addition, there are also some visibility constraints in $G(P)$ between the vertices of $A_e(G)$, $B_e(G)$, and $C_e(G)$ which have not been taken into consideration, yet. These constrains (bordering constraints) are discussed with more details as follows.

\subsection{Split constraints}
The bordering graph $C_e^\prime(G)$, which corresponds to the sub-polygon $C_e(G)$, can be computed by applying the leveling method on this visibility graph. As stated in Lemma~\ref{obs:bordering} and Lemma~\ref{obs:level_count}, $C_e^\prime(G)$ is a bipartite graph with some connected components. Without considering the visibility relations between vertices of $C_e(G)$ and the chain $\mathcal{W}$, each partition of these connected components can be arbitrarily assigned to each of the chains $\mathcal{U}$ and $\mathcal{V}$. Name the last vertex of $\mathcal{U}$ as $p$ and the last vertex of $\mathcal{V}$ as $q$. Denote the chain of vertices of $\mathcal{W}$ which are visible from both of $p$ and $q$ as \textit{common-chain}. There are two candidates for $p^{1}_{c(\mathcal{U},\mathcal{W})}$ in $A_e(G)$ (similarly for $q^{1}_{c(\mathcal{V},\mathcal{W})}$ in $B_e(G)$). Denote the chain of vertices of $\mathcal{W}$ which are visible from $p^{1}_{c(\mathcal{V},\mathcal{W})}$ (resp. $p^{1}_{c(\mathcal{V},\mathcal{W})}$) and are not on the common-chain as $W^\prime$ (resp. $W^{''}$). In addition, denote the chain which is composed of vertices of $W^\prime$, $W^{''}$ as \textit{split-chain}. The following lemmas introduce two visibility constraints on bordering the vertices of $C_e(G)$ (bordering constraints):

\begin{lemma}
\label{lem:sidevis}
Vertices of $\mathcal{V}$ are invisible to vertices of $W^\prime$. Similarly, vertices of $\mathcal{U}$ are invisible to the vertices of $W^{''}$.
\end{lemma}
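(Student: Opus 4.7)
The plan is to prove the claim by contradiction. Suppose some $v \in \mathcal{V}$ sees some $w \in W'$. My first step is to locate $w$ on $\mathcal{W}$: using the definition of $W'$ and the fact that visibility arcs along concave chains are contiguous intervals with a monotonicity as one walks down the chain, I would show that $w$ lies strictly between $c(\mathcal{U},\mathcal{W})$ and $w'_0$ on $\mathcal{W}$. The vertex $p^{1}_{c(\mathcal{U},\mathcal{W})}$ is one step below $p$ on the concave chain $\mathcal{U}$, so its visibility arc on $\mathcal{W}$ is shifted one step toward $c(\mathcal{U},\mathcal{W})$ compared with $p$'s arc; the vertices in $W'$ are exactly those newly uncovered by this shift and therefore sit on the $c(\mathcal{U},\mathcal{W})$-side of the common chain.

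The heart of the argument is a Jordan-curve step. The assumed chord $vw$, together with the boundary walk from $w$ back through $c(\mathcal{U},\mathcal{W})$, up $\mathcal{U}$ through $p^{1}_{c(\mathcal{U},\mathcal{W})}$ and $p$, across to $c(\mathcal{U},\mathcal{V})$, and then down $\mathcal{V}$ to $v$, forms a simple closed curve $\gamma$ inside $\overline{P}$. This curve separates $w'_0$ (which sits between $w$ and $c(\mathcal{V},\mathcal{W})$ on $\mathcal{W}$) from the region on the $\mathcal{U}$-side of $vw$. The known visibility chord $p\,w'_0$, which exists because $p \in C_e(G)$ and $w'_0$ is in the common chain, begins at the point $p \in \gamma$ and ends at $w'_0$, and thus its existence inside $P$ imposes strong planar constraints that I would exploit in the next step.

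Specifically, in the final step I would combine these constraints with the chord $p^{1}_{c(\mathcal{U},\mathcal{W})}\,w$ and with the analogous chord $p\,w'_1$ to show that $p^{1}_{c(\mathcal{U},\mathcal{W})}$ itself must see both $w'_0$ and $w'_1$. This forces $p^{1}_{c(\mathcal{U},\mathcal{W})}$ into $C_e(G)$ by the construction of Section~\ref{sec:topC}, contradicting the identification of $p$ as the deepest vertex of $\mathcal{U}$ in $C_e(G)$. The symmetric statement about $\mathcal{U}$ and $W''$ then follows by swapping the roles of $\mathcal{U}$ and $\mathcal{V}$. The main obstacle I anticipate is the side-of-curve bookkeeping in the Jordan step and the clean derivation of the extended visibility of $p^{1}_{c(\mathcal{U},\mathcal{W})}$; I would streamline both by invoking the contiguity and monotonicity of visibility arcs along concave chains in pseudo-triangles to replace detailed case analysis with a short combinatorial argument.
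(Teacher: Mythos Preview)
Your plan is substantially more elaborate than what the paper does. The paper's proof is a single sentence: if some $w\in W'$ were visible from a vertex of $\mathcal V$, then by the definition of the common-chain $w$ would lie in the common-chain, contradicting $W'\cap(\text{common-chain})=\varnothing$. In other words, once one grants the contiguity/monotonicity of visibility intervals along concave chains (which you already invoke in your Step~1), the conclusion is immediate from how $W'$ and the common-chain are defined, and no Jordan-curve machinery is needed.

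Your route also has a genuine gap in Step~3. From the crossings you set up (of $vw$ with $pw'_0$ and $pw'_1$) together with the chord $p^{1}_{c(\mathcal U,\mathcal W)}w$, it does not follow that $p^{1}_{c(\mathcal U,\mathcal W)}$ sees $w'_1$. In a non-convex polygon, two visibility chords crossing does \emph{not} force mutual visibility of their endpoints, so the ``crossing implies extended visibility'' step needs an argument specific to the concave-chain structure that you have not supplied. Moreover, $p^{1}_{c(\mathcal U,\mathcal W)}\notin C_e(G)$ by construction, so it fails to see at least one of $w'_0,w'_1$; by the very monotonicity you use in Step~1 the one it can miss is $w'_1$, regardless of whether $v$ sees $w$. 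Hence your intended contradiction target is not reachable along the path you outline. A smaller correction: your description of $W'$ as the vertices ``newly uncovered'' by the shift from $p$ to $p^{1}_{c(\mathcal U,\mathcal W)}$ is too narrow; $W'$ is defined as the vertices visible from $p^{1}_{c(\mathcal U,\mathcal W)}$ that are not in the common-chain, which can also include vertices already visible from $p$ but not from $q$.

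The fix is to drop the Jordan step and argue directly, as the paper does: visibility from any vertex of $\mathcal V$ to a vertex $w$ of $\mathcal W$ on the $c(\mathcal U,\mathcal W)$-side of the common-chain would, by the monotonicity of visibility intervals, force $w$ into the common-chain.
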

\begin{proof}
If a vertex of $W^{\prime}$ or $W^{''}$ violates this condition, by the definition of the common-chain it belongs to common-chain.
\end{proof}

\begin{lemma}
\label{lem:downvis}
For two vertices $a$ and $b$ on the chain $\mathcal{V}$ (resp. $\mathcal{U}$) of $C_e(G)$ such that the level of $b$ is bigger than $a$, the set of vertices of $B_e(G)$ (resp. $A_e(G)$) which are visible from $b$ are a super set of the set of vertices of $B_e(G)$ (resp. $A_e(G)$) which are visible from $a$.
\end{lemma}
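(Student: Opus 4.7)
The plan is to prove the lemma by contradiction. Suppose $a,b\in \mathcal{V}\cap C_e(G)$ with the level of $b$ strictly greater than that of $a$, and suppose some $x\in B_e(G)$ is visible from $a$ but not from $b$. Since $a\in C_e(G)$ and $x\in B_e(G)$ lie on opposite sides of the split-edge~$e$, the segment $\overline{ax}$ lies in the polygon and crosses $e$ exactly once.

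The main geometric ingredient is the concavity of $\mathcal{V}$: every internal vertex of $\mathcal{V}$ is reflex, so the subchain $\pi_{a,b}$ of~$\mathcal{V}$ joining $a$ and $b$ bulges into the polygon interior relative to the chord $\overline{ab}$, while $\overline{ab}$ itself lies outside the polygon. Combined with the pseudo-triangle layout and the fact that $B_e(G)$ is the component across~$e$ containing $c(\mathcal{V},\mathcal{W})$, this places $x$ on the opposite side of the line through $\overline{ab}$ from $\pi_{a,b}$. Consequently $\pi_{a,b}$ cannot itself obstruct the hypothetical segment $\overline{bx}$.

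To finish I would sweep: move the endpoint $v$ along $\pi_{a,b}$ from $a$ down to $b$ and monitor the segment $\overline{vx}$. At $v=a$ the segment is inside the polygon; at $v=b$, by hypothesis, it is not; so there is a first vertex $v^{\star}$ where $\overline{v^{\star}x}$ first grazes some chain vertex $y$. A case analysis rules out every possible location of $y$: a $y\in\pi_{a,b}$ contradicts the separation established above; a $y$ on~$\mathcal{U}$, on $\mathcal{V}$ above~$a$, or on the $A_e(G)$-side of~$\mathcal{W}$ would already have cut off $\overline{ax}$; and a $y$ on $\mathcal{V}$ below~$b$ or on the $B_e(G)$-side of~$\mathcal{W}$ lies on the $x$-side of $\overline{ab}$ beyond the split-edge, where the visibility $a\leftrightarrow x$ already guarantees a clear corridor that the sweep only shortens as it rotates toward~$b$.

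The main obstacle I expect is the geometric separation claim in the second paragraph---that $\pi_{a,b}$ lies strictly on the opposite side of $\overline{ab}$ from every $x\in B_e(G)$. Making it rigorous requires combining the concavity of $\mathcal{V}$, the standard pseudo-triangle layout, and the split-edge definition of $B_e(G)$; without it, the first case in the sweep collapses. Once the separation is in place, the remaining cases reduce to routine visibility arguments, and the symmetric statement for $\mathcal{U}$ and $A_e(G)$ follows by exchanging the roles of the two side chains.
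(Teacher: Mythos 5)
There is a genuine gap, and it sits exactly where you yourself flagged it. First, the separation claim in your second paragraph is not true as stated: nothing forces a vertex $x\in B_e(G)$ that is visible from $a$ to lie on the opposite side of the line through $a$ and $b$ from the subchain $\pi_{a,b}$. The line through two vertices of the concave chain $\mathcal{V}$ can be strongly tilted, and a vertex of $\mathcal{W}$ just beyond the split-edge (hence in $B_e(G)$) may lie on the same side as the bulge and still be seen from $a$, the segment $\overline{ax}$ simply passing on the interior side of the bulge. So the first case of your sweep does not close as announced (blockers strictly between $a$ and $b$ can be excluded, but by a local argument at $b$, not by this side-of-line claim). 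Second, and more seriously, the case you dispose of with ``the visibility $a\leftrightarrow x$ already guarantees a clear corridor that the sweep only shortens'' is precisely the content of the lemma, and the corridor principle you invoke is false as a general geometric fact. Cutting the polygon along the chord $\overline{ax}$ leaves a funnel bounded by $\overline{ax}$, the part of $\mathcal{V}$ from $a$ down to $c(\mathcal{V},\mathcal{W})$, and the part of $\mathcal{W}$ from $c(\mathcal{V},\mathcal{W})$ to $x$; your claim amounts to asserting that in such a funnel every vertex of one chain sees the opposite base endpoint. One can build funnels (two concave chains sharing an apex, base chord joining their free ends) in which an internal vertex $b$ of one chain is cut off from the opposite base vertex $x$ by the chain edge incident to $x$, while the other base vertex $a$ still sees $x$. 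Hence any correct proof must exploit structure your sweep never uses, namely that $a,b$ lie in $C_e(G)$ above the split-edge and $x$ lies in $B_e(G)$ below it (for instance, a blocker protruding far enough into the interior to cut off $b$ would itself see both side chains, contradicting the location of the split-edge).

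This is also where your route diverges from the paper. The paper does not argue segment by segment at all: it observes that the chain $\mathcal{V}$ together with the vertices of $\mathcal{W}$ visible from $\mathcal{V}$ forms a pseudo-tower polygon and then appeals to the corresponding monotonicity property of pseudo-towers (admittedly stated there without further proof). Your direct sweep would, in effect, have to re-prove that monotonicity from scratch, and at present the two steps that would carry the load, the separation claim and the shrinking-corridor claim, are respectively unsupported and invalid in the generality in which you use them.
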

\begin{proof}
The vertices of the chain $\mathcal{V}$ and the set of vertices of $\mathcal{W}$ which are visible from some vertex in $\mathcal{V}$ make a pseudo-tower polygon. The statement of the lemma holds in every pseudo-tower polygon.
\end{proof}

For a vertex $a$ on chain $\mathcal{U}$ in $C_e(G)$ the vertex with the highest level in $\mathcal{V}$ is the vertex which blocked the sight of $a$ and the invisible vertices (blocking vertex of $a$) of $W^{''}$. Therefore, we have:

\begin{lemma}
\label{lem:blockvis}
For every vertex $a$ on chain $\mathcal{U}$ (resp. $\mathcal{U}$) in $C_e(G)$, the set of vertices of $B_e(G)$ (resp. $A_e(G)$) which are visible to the blocking vertex of $a$ is a super set of the set of vertices of $B_e(G)$ (resp. $A_e(G)$) which are visible to $a$.
\end{lemma}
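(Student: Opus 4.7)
The plan is to prove Lemma~\ref{lem:blockvis} by a direct geometric argument about the visibility cone of $a$ and the position of its blocking vertex. I focus on the case $a \in \mathcal{U}$ with blocking vertex $b \in \mathcal{V}$; the other case follows by the symmetric argument.

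First I would set up the picture implied by the definition preceding the lemma: $b$ is the highest-level vertex of $\mathcal{V}$ in $C_e(G)$ whose presence blocks $a$'s sight to at least one vertex of $W^{''}$. In particular $a$ sees $b$, so segment $ab$ lies in $P$, and the ray from $a$ through $b$, extended beyond $b$, is precisely the window edge of $a$'s visibility polygon on the $B_e(G)$-side of the split edge. Every point on the far side of that ray and below $b$ lies in the shadow of $b$ as seen from $a$ and is therefore invisible from $a$.

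Next, for any $x \in B_e(G)$ visible from $a$, I would show that $bx \subset P$, so $x$ is visible from $b$. Since $a$ sees both $b$ and $x$, the segments $ab$ and $ax$ both lie in $P$. The key geometric claim is that $x$ lies on the polygon-interior side of the ray $\overrightarrow{ab}$: otherwise $x$ would sit in the shadow of $b$ and be invisible from $a$, contradicting the hypothesis. Given this, the triangle $abx$ sits inside the unblocked portion of $a$'s visibility wedge, which is a subset of $P$, so $bx \subset P$ and $b$ sees $x$.

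The main obstacle is formalizing the claim that the triangle $abx$ is contained in $P$. This relies on the reflex nature of chain $\mathcal{V}$ together with the precise characterization of $b$ as the highest-level blocker, and one has to rule out the possibility that some portion of $\mathcal{V}$ or of $\mathcal{W}$ below $b$ intrudes into the triangle. To handle the two sub-cases $x \in B_e(G) \cap \mathcal{V}$ and $x \in B_e(G) \cap \mathcal{W}$ uniformly, I would invoke the concavity of the chains and, where convenient, appeal to Lemma~\ref{lem:downvis} in order to propagate $a$'s $B_e(G)$-visibility downward along $\mathcal{V}$ to $b$, thereby converting the geometric containment into an already-established combinatorial monotonicity statement.
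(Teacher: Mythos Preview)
The paper actually gives no explicit proof of this lemma: it is stated immediately after the sentence defining the blocking vertex, prefaced only by ``Therefore, we have:'', and no proof environment follows. The authors evidently regard it as an immediate consequence of the definition --- $b$ is precisely the vertex of $\mathcal{V}$ that cuts off $a$'s line of sight into $W''$, so anything in $B_e(G)$ that $a$ does see lies on the near side of $b$ and hence is visible from $b$ as well.

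Your proposal is therefore substantially more detailed than what the paper offers, and the geometric set-up (visibility window at $b$, shadow region beyond the ray $\overrightarrow{ab}$, triangle $abx$) is a correct way to make that one-line intuition rigorous. You are also right to flag the containment $\triangle abx \subset P$ as the only nontrivial step; it really does need the concavity of $\mathcal{V}$ (so that no vertex of $\mathcal{V}$ below $b$ protrudes into the triangle) together with the fact that $\mathcal{W}$ lies entirely on the far side of the split edge.

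One caution about your last paragraph: the appeal to Lemma~\ref{lem:downvis} does not quite do what you want. That lemma compares the $B_e(G)$-visibility of two vertices \emph{on the same chain} $\mathcal{V}$, whereas here $a\in\mathcal{U}$ and $b\in\mathcal{V}$ are on different chains, so the monotonicity statement does not directly transfer $a$'s visibility to $b$. You still need the geometric blocking argument (your triangle/shadow reasoning) to bridge the two chains; Lemma~\ref{lem:downvis} can at best help once you have already reduced to comparing $b$ with some other vertex on $\mathcal{V}$. I would keep the shadow/triangle argument as the core and drop the reduction to Lemma~\ref{lem:downvis}.
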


Verifying each of these constraints for a boardering needs $O(E)$ time. In addition these constraints force that two different borderings which satisfy these constraints to be isomorph to each other, because if a pair of vertices are arbitrarily placed on $\mathcal{V}$ or $\mathcal{W}$ then they must have the same neighbours in $A_e(G)$ and $B_e(G)$ and also the same neighbours in split-chain (otherwise, swapping the chain of the vertices violates one of the increasing pattern in either Lemma~\ref{lem:downvis} or Lemma~\ref{lem:blockvis}). Therefore, there is only constant number of non-isomorph borderings which satisfy these three linear time verifiable constraints. To find this bordering we can begin with an arbitrary bordering of $C_e(G)$ and sweep the vertices of one of its chains from top to buttom. If a constraint is violated on a vertex like $a$, the chain of vertices of the bordering graph, $G^\prime$, which are in the same component with $a$ must be swapped so that the bordering graph remains bipartite and have a chance to satisfy the bordering constraints. By continueing these constraints, it will either iterate over all vertices of $C_e(G)$ or it violates one of the constraints. Therefore we have the following theorem:

\begin{theorem}
\label{thm:split_consts}
There are linear time verifiable constraints (bordering constraints) such that $O(1)$ number of non-isomorph pairs $(G,H)$ for a pseudo-triangle with visibility graph $G$, Hamiltonian cycle $H$, the top vertex $c(\mathcal{U},\mathcal{V})$ and the split-edge $e$, satisfies them.
\end{theorem}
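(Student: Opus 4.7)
The plan is to split the theorem into three sub-claims that together imply the statement: (i) each of the bordering constraints established in Lemma~\ref{lem:sidevis}, Lemma~\ref{lem:downvis} and Lemma~\ref{lem:blockvis} can be checked against a candidate bordering of $C_e(G)$ in $O(|E|)$ time; (ii) any two borderings satisfying all three constraints are isomorphic as labelled boundary cycles, so only $O(1)$ non-isomorphic pairs $(G,H)$ survive; and (iii) such a bordering can be found, or shown not to exist, by a greedy sweep over the vertices of $C_e(G)$.

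For (i), I would verify Lemma~\ref{lem:sidevis} by a single pass over the edges incident to the vertices currently assigned to $W^{\prime}$ and $W^{\prime\prime}$, rejecting any edge that crosses to the forbidden opposite chain. For Lemma~\ref{lem:downvis} and Lemma~\ref{lem:blockvis} I would walk along each chain of $C_e(G)$ in leveling order and, using a marking array over the vertices of $A_e(G)$ and $B_e(G)$, confirm that the set of visible vertices at level $i$ is contained in that at level $i+1$; amortising the marks against incident edges keeps the total cost linear. For (ii), I would follow the sketch already given before the theorem: if two valid borderings disagree on the chain of some vertex $v$ of $C_e(G)$, then swapping the chain of $v$ must leave both the bipartite structure of $C_e^{\prime}(G)$ and the monotone nesting of neighbour sets demanded by Lemma~\ref{lem:downvis} and Lemma~\ref{lem:blockvis} intact; this forces $v$ and its swap-partner to have identical neighbours in $A_e(G)$, $B_e(G)$ and the split-chain, so the swap is an automorphism of $G$ and the two borderings are isomorphic. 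Combining this with the $O(1)$ choices for $c(\mathcal{U},\mathcal{V})$ from Theorem~\ref{thm:top} and the $O(1)$ choices for the endpoints of the split-edge $e$, only constantly many non-isomorphic pairs $(G,H)$ remain.

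For (iii), I would start from any bipartite bordering of $C_e^{\prime}(G)$ and sweep one chain of $C_e(G)$ from the top joint vertex downward; whenever a constraint is violated at a vertex $a$, flip the chain assignment of the connected component of $C_e^{\prime}(G)$ that contains $a$. Such a flip preserves bipartiteness, and by the uniqueness argument of (ii) it is the only move that can ever repair the violation. Each component is therefore flipped at most once, so the sweep halts in $O(|E|)$ time, either producing a bordering that passes all three checks or reporting that none exists.

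The hardest part, I expect, is the rigorous proof of (ii). The constraints are local, relating each vertex only to its neighbours on adjacent chains, whereas the conclusion is that all valid borderings agree globally up to an automorphism of $G$. I plan to run an induction on the leveling number, using Lemma~\ref{lem:blockvis} to propagate a forced chain assignment from the current level to the next and Lemma~\ref{lem:sidevis} as the base step that anchors the chain of vertices immediately above the split-edge. Care is needed when a connected component of $C_e^{\prime}(G)$ spans many levels, because a single swap then has consequences far from the level at which the violation is first detected, and I must argue that these delayed consequences cannot produce a second, genuinely non-isomorphic solution.
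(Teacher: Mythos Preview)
Your proposal is correct and follows essentially the same three-part structure as the paper's own argument (given in the paragraph immediately preceding the theorem): linear-time verification of the three lemma-constraints, the ``swap forces identical neighbour sets, hence isomorphism'' uniqueness argument, and the greedy top-down sweep that flips whole components of $C_e^{\prime}(G)$. One small slip: you should not fold in choices for $c(\mathcal{U},\mathcal{V})$ and $e$ in step~(ii), since both are fixed by hypothesis in this theorem (and, incidentally, in the overall algorithm the split-edge ranges over $O(|E|)$ candidates, not $O(1)$).
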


\subsection{Recognizing the visibility graph}
Theorem~\ref{thm:split_consts} leads to the final step of our method. We have a visibility graph $G$ with a Hamiltonian cycle $H$ and we need to verify that whether there is a pseudo-triangle with this pair as its visibility graph and Hamiltonian cycle. As stated in Section~\ref{sec:A}, this problem can be solved in $O(E)$. In the next section we review the outline and computational complexity of our method and prove that all the steps are polynomial time computable. 

\section{Computational complexity}
Our method to find Hamiltonian cycle of a visibility graph $G$ of a pseudo-triangle has these steps:
\begin{enumerate}
\item Determine the candidates of the top joint vertex of the polygon. Lemma~\ref{lem:top-degree} states that this step takes $O(E)$ time and obtains at most three candidates for the top joint vertex. 
\item Determine the split-edge of the polygon. This step chooses at most $O(E)$ candidates for split-edge.
\item Split the polygon using the determined split-edge. In this step, we find the vertices of the sub-polygon $C$ and split the rest of the graph into its two connected components. The induced graph on the vertices of each connected component produces the sub-polygons $B$ and $C$. This can be computed in $O(E)$ time.
\item Compute the leveling and $G^\prime$ of each sub-polygon and determine the Hamiltonian cycle of each sub-polygon. The leveling of each sub-polygon is computable in $O(E)$ time using the leveling algorithm for tower polygons.
\item Determine the split-chain. There are $O(1)$ choices for the last vertex of each of the chains of each sub-polygon. This takes $O(E)$ time to determine these candidates from the leveling graph. After determining these vertices, it takes $O(E)$ time to find the complete split-chain.
\item Compute the Hamiltonian cycle of the graph. It takes $O(E)$ time to apply the constraints on the leveling graph and find the Hamiltonian cycle (bordering) of the pseudo-triangle.
\item Verify the pair of visibility graph and Hamiltonian cycle. It takes $O(E)$ time to solve the recognition problem of pseudo-triangle for a pair of visibility graph and Hamiltonian cycle.
\end{enumerate}

In summary, this method takes a graph $G$ and spends $O(E^2)$ time to determine $O(E)$ pairs of $(G,H)$ as the total possible candidates for the visibility graph and Hamiltonian cycle of a pseudo-triangle whose visibility graph and $G$ are isomorph. It takes $O(E)$ time to solve the recognition problem for each of these pairs. Then, we have the final theorem:

\begin{theorem}
Having only the visibility graph $G(V,E)$, recognizing and reconstruction problems for pseudo-triangles can be solved in $O(E^2)$.
\end{theorem}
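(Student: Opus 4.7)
The plan is to assemble the pieces developed in Sections 3 and 4 into a single enumeration-and-verify procedure, and bound its running time by multiplying the number of combinatorial choices at each stage by the cost of processing one choice. Concretely, given only $G(V,E)$, I will first compute vertex degrees and use Lemma~\ref{lem:top-degree} and Theorem~\ref{thm:top} to extract at most three candidates for the top join vertex $c(\mathcal{U},\mathcal{V})$, which costs $O(E)$. For each such candidate I then iterate over the $O(E)$ candidate split-edges $e = (w'_0, w'_1)$ guaranteed by Section~\ref{sec:split} (including the degenerate handling of Section~\ref{sec:degen}). This gives a total of $O(E)$ candidate pairs (top vertex, split edge) to consider, since a constant factor of three from the top-vertex choice is absorbed.

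For each pair I would carry out steps (3)--(6) of the computational outline. Using the characterization in Section~\ref{sec:topC}, the vertices of $C_e(G)$ are identified in $O(E)$ by collecting the common neighbours of $w'_0$ and $w'_1$ and then peeling off further levels via the tower-leveling subroutine; removing these vertices splits the rest into the two pseudo-tower components $A_e(G)$ and $B_e(G)$, whose borderings and Hamiltonian cycles are computed in $O(E)$ by Lemma~\ref{lem:pseudo-tower}. Next, I would determine the split-chain from the $O(1)$ possible choices for the last vertex of each boundary chain of $C_e(G)$, again in $O(E)$ time. Finally, I would apply the constraints of Lemmas~\ref{lem:sidevis}, \ref{lem:downvis} and~\ref{lem:blockvis}, sweeping each chain of the bordering graph from top to bottom and swapping the partition of any component whose placement currently violates one of the three constraints; by Theorem~\ref{thm:split_consts} this either yields an essentially unique consistent bordering (and therefore a candidate Hamiltonian cycle $H$) or certifies that no such bordering exists, all within $O(E)$ per candidate pair.

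For every candidate pair $(G,H)$ produced, I would then invoke the known pseudo-triangle recognition and reconstruction algorithm of \cite{pseudotriangle}, which runs in $O(E)$ given the pair. If any invocation succeeds the algorithm returns the corresponding pseudo-triangle; if all fail, $G$ is not realizable as a pseudo-triangle. Summing costs, step (1) is $O(E)$, steps (2)--(6) contribute $O(E)$ per candidate split-edge, and there are $O(E)$ such candidates, giving $O(E^2)$ total work for generating the pool of at most $O(E)$ candidate Hamiltonian cycles; the final recognition step adds another $O(E)\cdot O(E) = O(E^2)$, so the overall running time is $O(E^2)$.

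The main obstacle I foresee is not the bookkeeping but justifying that the enumeration above is exhaustive, i.e.\ that if $G$ is indeed realizable then one of the $O(E)$ pairs we generate is the correct $(G,H)$. This relies essentially on Theorem~\ref{thm:top} to cover the top-vertex choice and on the constraint-propagation argument behind Theorem~\ref{thm:split_consts} to guarantee that the correct bordering is among the $O(1)$ non-isomorphic borderings satisfying the split constraints for the true split-edge. Everything else is a routine accounting of linear-time subroutines, so the heart of the proof is to invoke these two theorems together with Lemma~\ref{lem:pseudo-tower} to argue both correctness and the $O(E^2)$ bound.
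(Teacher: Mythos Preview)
Your proposal is correct and follows essentially the same approach as the paper: enumerate at most three top-vertex candidates (Theorem~\ref{thm:top}), for each iterate over $O(E)$ split-edges, perform the $O(E)$ splitting, leveling, split-chain determination, and constraint-based bordering of Theorem~\ref{thm:split_consts}, and then hand each resulting $(G,H)$ to the $O(E)$ recognizer of \cite{pseudotriangle}, for a total of $O(E^2)$. Your explicit discussion of exhaustiveness is a welcome addition but does not deviate from the paper's argument.
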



\small
\bibliographystyle{abbrv}

\bibliography{main}



\end{document}